\def\noheaderplainsetup{

\topmargin=0pt \headheight=0pt \headsep=0pt  \oddsidemargin=0pt \evensidemargin=0pt  \textheight=8.9truein \textwidth=6.5truein}   
\begin{document}


\newcommand{\clthree}{\mbox{\bf CL12}}
\newcommand{\cltw}{\mbox{\bf CL12}}
\newcommand{\clfour}{\mbox{\bf CL4}}
\newcommand{\arfour}{\mbox{\bf CLA4}} 
\newcommand{\arfive}{\mbox{\bf CLA5}} 
\newcommand{\arsix}{\mbox{\bf CLA6}}
\newcommand{\arseven}{\mbox{\bf CLA7}}
\newcommand{\areight}{\mbox{\bf CLA8}}
\newcommand{\arnine}{\mbox{\bf CLA9}}
\newcommand{\arten}{\mbox{\bf CLA10}}
\newcommand{\pa}{\mbox{\bf PA}}


\newcommand{\intimpl}{\mbox{\hspace{2pt}$\circ$\hspace{-0.14cm} \raisebox{-0.043cm}{\Large --}\hspace{2pt}}}
\newcommand{\zero}{\mbox{\small {\bf 0}}}
\newcommand{\one}{\mbox{\small {\bf 1}}}
\newcommand{\successor}{\mbox{\hspace{1pt}\boldmath $'$}}

\newcommand{\elz}[1]{\mbox{$\parallel\hspace{-3pt} #1 \hspace{-3pt}\parallel$}} 
\newcommand{\elzi}[1]{\mbox{\scriptsize $\parallel\hspace{-3pt} #1 \hspace{-3pt}\parallel$}}
\newcommand{\emptyrun}{\langle\rangle} 
\newcommand{\oo}{\bot}            
\newcommand{\pp}{\top}            
\newcommand{\xx}{\wp}               
\newcommand{\legal}[2]{\mbox{\bf Lr}^{#1}_{#2}} 
\newcommand{\win}[2]{\mbox{\bf Wn}^{#1}_{#2}} 
\newcommand{\seq}[1]{\langle #1 \rangle} 
\newcommand{\code}[1]{\ulcorner #1 \urcorner}          


\newcommand{\pst}{\mbox{\raisebox{-0.01cm}{\scriptsize $\wedge$}\hspace{-4pt}\raisebox{0.16cm}{\tiny $\mid$}\hspace{2pt}}}
\newcommand{\pcost}{\mbox{\raisebox{0.12cm}{\scriptsize $\vee$}\hspace{-4pt}\raisebox{0.02cm}{\tiny $\mid$}\hspace{2pt}}}

\newcommand{\gneg}{\mbox{\small $\neg$}}                  
\newcommand{\mli}{\hspace{2pt}\mbox{\small $\rightarrow$}\hspace{2pt}}                      
\newcommand{\cla}{\mbox{$\forall$}}      
\newcommand{\cle}{\mbox{$\exists$}}        
\newcommand{\mld}{\hspace{2pt}\mbox{\small $\vee$}\hspace{2pt}}     
\newcommand{\mlc}{\hspace{2pt}\mbox{\small $\wedge$}\hspace{2pt}}   
\newcommand{\mlci}{\hspace{2pt}\mbox{\footnotesize $\wedge$}\hspace{2pt}}   
\newcommand{\ade}{\mbox{\large $\sqcup$}}      
\newcommand{\ada}{\mbox{\large $\sqcap$}}      
\newcommand{\add}{\hspace{2pt}\mbox{\small $\sqcup$}\hspace{2pt}}                     
\newcommand{\adc}{\hspace{2pt}\mbox{\small $\sqcap$}\hspace{2pt}} 
\newcommand{\adci}{\hspace{2pt}\mbox{\footnotesize $\sqcap$}\hspace{2pt}}              
\newcommand{\clai}{\forall}     
\newcommand{\clei}{\exists}        
\newcommand{\tlg}{\bot}               
\newcommand{\twg}{\top}               
\newcommand{\fintimpl}{\mbox{\hspace{2pt}$\bullet$\hspace{-0.14cm} \raisebox{-0.058cm}{\Large --}\hspace{-6pt}\raisebox{0.008cm}{\scriptsize $\wr$}\hspace{-1pt}\raisebox{0.008cm}{\scriptsize $\wr$}\hspace{4pt}}}
\newcommand{\col}[1]{\mbox{$#1$:}}


\newtheorem{theoremm}{Theorem}[section]
\newtheorem{factt}[theoremm]{Fact}
\newtheorem{definitionn}[theoremm]{Definition}
\newtheorem{lemmaa}[theoremm]{Lemma}
\newtheorem{propositionn}[theoremm]{Proposition}
\newtheorem{conventionn}[theoremm]{Convention}
\newtheorem{examplee}[theoremm]{Example}
\newtheorem{exercisee}[theoremm]{Exercise}
\newtheorem{thesiss}[theoremm]{Thesis}
\newtheorem{remarkk}[theoremm]{Remark}
\newenvironment{definition}{\begin{definitionn} \em}{ \end{definitionn}}
\newenvironment{theorem}{\begin{theoremm}}{\end{theoremm}}
\newenvironment{lemma}{\begin{lemmaa}}{\end{lemmaa}}
\newenvironment{fact}{\begin{factt}}{\end{factt}}
\newenvironment{proposition}{\begin{propositionn} }{\end{propositionn}}
\newenvironment{convention}{\begin{conventionn} \em}{\end{conventionn}}
\newenvironment{example}{\begin{examplee} \em}{\end{examplee}}
\newenvironment{thesis}{\begin{thesiss} \em}{\end{thesiss}}
\newenvironment{remark}{\begin{remarkk} \em}{\end{remarkk}}
\newenvironment{exercise}{\begin{exercisee} \em}{\end{exercisee}}
\newenvironment{proof}{ {\bf Proof.} }{\  \rule{2.5mm}{2.5mm} \vspace{.2in} }
\newenvironment{idea}{ {\bf Proof idea.} }{\  \rule{1.5mm}{1.5mm} \vspace{.15in} }
\newenvironment{subproof}{ {\em Proof.} }{\  \rule{2mm}{2mm} \vspace{.1in} }

\title{Introduction to clarithmetic III}
\author{Giorgi Japaridze}

\date{}
\maketitle

\begin{abstract} The present paper constructs three new systems of clarithmetic (arithmetic based on {\em computability logic}): {\bf CLA8}, {\bf CLA9} and {\bf CLA10}. System {\bf CLA8} is shown to be sound and
 extensionally complete with respect to {\bf PA}-provably recursive time computability. This is in the sense that an arithmetical problem $A$ has a $\tau$-time solution for some {\bf PA}-provably recursive function $\tau$ iff $A$ is represented by some theorem of {\bf CLA8}. System {\bf CLA9} is shown to be sound and intensionally complete with respect to constructively {\bf PA}-provable computability. This is in the sense that a sentence $X$ is 
a theorem of  {\bf CLA9} 
iff, for some particular machine $\cal M$, {\bf PA} proves that $\cal M$ computes (the problem represented by) $X$. And system {\bf CLA10} is shown to be 
sound and intensionally complete with respect to not-necessarily-constructively {\bf PA}-provable computability. This means that a sentence $X$ is 
a theorem of {\bf CLA10} iff {\bf PA} proves that $X$ is computable, even if {\bf PA} does not ``know'' of any particular machine $\cal M$ that computes $X$.  
\end{abstract}

\noindent {\em MSC}: primary: 03F50; secondary: 03F30; 03D75; 68Q10; 68T27; 68T30 

\

\noindent {\em Keywords}: Computability logic; Interactive computation; Game semantics; Peano arithmetic; Constructive theories  


\section{Introduction}\label{intr}
 
Being a continuation of \cite{cla4} and \cite{cla5}, this article  relies on the terminology, notation, conventions and technical results of its predecessors, with which the reader is assumed to be well familiar. While the present paper is not self-contained, the entire ``Introduction to clarithmetic'' series {\em is} so, and can be read without prior familiarity with {\em computability logic} (CoL), which serves as a logical basis for all theories elaborated in the series.   

The previously constructed systems $\arfour$, $\arfive$, $\arsix$ and $\arseven$ form a sequence of incrementally powerful theories, sound and extensionally complete with respect to polynomial time computability, polynomial space computability, elementary recursive time (=space) computability, and primitive recursive time (=space) computability, respectively. Continuing that pattern, the present paper  introduces three new, incrementally strong (and stronger than their predecessors) theories $\areight$, $\arnine$ and $\arten$.

A natural extreme beyond primitive recursive time is {\em $\pa$-provably recursive time} (which can be easily seen to be equivalent to $\pa$-provably recursive space). That means considering $\pa$-provably recursive functions instead of primitive recursive functions as time complexity bounds for computational problems. Our present theory $\areight$ turns out to be sound and  complete with respect to $\pa$-provably recursive time computability in the same sense as $\arseven$ is sound and complete with respect to primitive recursive time computability. 
Remember that, on top of the standard Peano axioms, $\arseven$  had the single extra-Peano axiom $\ada x\ade y(y= x+ 1)$, and its only  nonlogical rule of inference, termed ``$\arseven$-Induction'',  was 
\[\frac{F(0)\hspace{30pt} F(x)\mli F(x+ 1)}{F(x)},\]
with no restrictions on $F(x)$. $\areight$ augments $\arseven$ through the following single additional rule
\begin{equation}\label{jul18a}
\frac{F(x)\add\gneg F(x)\hspace{30pt} \cle x F(x)}{\ade x F(x)},\end{equation}
where $F(x)$ is elementary. A justification for this rule is  that, if we know how to decide the predicate $F(x)$ (the left premise), and we also know that the predicate is true of at least one number (the right premise), then we can  apply the decision procedure to $F(0)$, $F(1)$, $F(2)$, \ldots until a number $n$ is hit such that the procedure finds $F(n)$ true, after which the conclusion $\ade xF(x)$ can be solved by choosing $n$ for $x$ in it. 
 
The story does not end with provably recursive time computability though.  Not all computable problems have recursive (let alone provably so) time complexity bounds. In other words, not all computable problems are {\bf recursive time computable}. An example is 
\begin{equation}\label{jul19a}
\ada x\bigl(\cle y \hspace{1pt}p(x,y)\mli \ade y\hspace{1pt}p(x,y)\bigr),\end{equation}
 where $p(x,y)$ is a decidable binary predicate such that the unary predicate $\cle y\hspace{1pt}p(x,y)$ is undecidable (for instance, $p(x,y)$ means ``Turing machine $x$ halts within $y$ steps'').  Problem (\ref{jul19a}) is solved by the following effective strategy: Wait till Environment chooses a value $m$ for $x$. After that, for $n=0,1,2,\ldots$, figure out whether $p(m,n)$ is true. If and when you find an $n$ such that $p(m,n)$ is true, choose $n$ for $y$ in the consequent and retire. On the other hand, if there was a recursive bound $\tau$ for the time complexity of a solution $\cal M$ of (\ref{jul19a}), then the following would be a decision procedure for (the undecidable) $\cle y\hspace{1pt}p(x,y)$: Given an input $m$ (in the role of $y$), run $\cal M$ for $\tau(|m|+ 1)$ steps in the scenario where Environment chooses $m$ for $x$ at the very beginning of the play of (\ref{jul19a}), and does not make any further moves. If, during this time, $\cal M$ chooses a number $n$ for $y$ in the consequent of (\ref{jul19a}) such that $p(m,n)$ is true, accept; otherwise reject.\footnote{An alternative solution: Figure out whether there is a number $n$ with $|n|\leq \tau(|m|+ 1)$ such that $p(m,n)$ is true. If yes, accept; otherwise reject.}  
 
A next natural step on the road of constructing incrementally strong clarithmetical theories for incrementally weak concepts of computability is to go beyond $\pa$-provably recursive time computability and consider the weaker concept of {\em constructively $\pa$-provable computability} of (the problem represented by) a sentence $X$. The latter means existence of a machine $\cal M$ such that $\pa$ proves that $\cal M$ computes $X$, even if the running time of $\cal M$ is not bounded by any recursive function. System $\arnine$ turns out to be sound and complete with respect to this sort of computability. That is, a sentence $X$ is provable in $\arnine$ if and only if it is constructively $\pa$-provably computable.  Deductively, $\arnine$ only differs from $\areight$ in that, instead of (\ref{jul18a}), it has the following, stronger, rule: 
\begin{equation}\label{jul18b}
\frac{F(x)\add\gneg F(x)}{\cle x F(x)\mli \ade x F(x)},\end{equation}
where $F(x)$ is elementary. Note that  (\ref{jul18b}) merely ``modifies'' (\ref{jul18a}) by changing the status of $\cle x F(x)$ from being a premise 
of the rule to being an antecedent of the conclusion. A justification for (\ref{jul18b}) is that, if we know how to decide the predicate $F(x)$, then we can apply the decision procedure to $F(0)$, $F(1)$, $F(2)$, \ldots until (if and when) a number $n$ is hit such that the procedure finds $F(n)$ true, after which the conclusion can be solved by choosing $n$ for $x$ in its consequent.  Note that, unlike the earlier-outlined strategy for (\ref{jul18a}), the present strategy may  look for $n$ forever, and thus never make a move. This, however, only happens when $\cle xF(x)$ is false, in which case the conclusion is automatically won. 

A further weaker concept of (simply) {\em $\pa$-provable computability} is obtained from that  of constructively $\pa$-provable computability by dropping the ``constructiveness'' condition. Namely, $\pa$-provable computability of a sentence $X$ means that $\pa$ proves that a machine $\cal M$ solving $X$ {\em exists}, yet without necessarily being able to prove ``$\cal M$ solves $X$'' for any {\em particular} machine $\cal M$. An example of a sentence that is $\pa$-provably computable yet not constructively so is $S\add\gneg S$, where $S$ is an elementary sentence with $\pa\not\vdash S$ and $\pa\not\vdash \gneg S$, such as G\"{o}del's sentence. Let $\cal L$ be a machine that chooses the left disjunct of $S\add\gneg S$ and retires. Similarly, let $\cal R$ be a machine that chooses the right disjunct and retires. One of these two machines is a solution of $S\add\gneg S$, and, of course, $\pa$ ``knows'' this. Yet, $\pa$ does not ``know'' which one of them is a solution (otherwise either $S$ or $\gneg S$ would be provable); nor does it have a similar sort of ``knowledge'' for any other particular machine.  

A system sound and complete with respect to $\pa$-provable computability is $\arten$. It augments $\arnine$ through the following additional rule: 
\begin{equation}\label{jul18c}
\frac{\cle xF(x)}{\ade x F(x)},\end{equation}
where the premise is an elementary sentence. The admissibility of this rule, simply allowing us to change $\cle x$ to $\ade x$, is obvious in view of the restriction that $\cle xF(x)$ is a sentence (that is, $F(x)$ contains no free variables other than $x$). Indeed, if an $x$ satisfying $F(x)$ exists, then it can as well be ``computed'' (generated), even if we do not know what particular machine ``computes'' it. As we remember, systems $\arfour$-$\arseven$ are sound in a strong, constructive sense. Specifically, there is an effective (in fact, efficient) procedure for extracting solutions from proofs. The same strong form of soundness holds for $\areight$ and $\arnine$ as well. $\arten$ stands out as the only system whose soundness theorem is not (and cannot be) constructive. Namely, while $\arten$-provability of a sentence $X$ implies that $X$ has an algorithmic solution, generally there is no effective  way to extract a particular solution from a proof of $X$.     

\section{Technical preliminaries}
All  terminology and notation not redefined in this paper has the same meaning as in \cite{cla4,cla5}. And all of our old conventions from \cite{cla4,cla5} extend to the present context as well.
Namely, as in \cite{cla5}, 
 a ``{\bf sentence}'' always means a sentence (closed formula) of the language of $\arfour$. Similarly for ``{\em formula}'', unless otherwise specified or suggested by the context. Also, where $n$ is a natural number, the {\bf standard term} for $n$ means  $0$ followed by $n$ ``$\successor$''s (e.g., $0\successor\successor\successor$ is the standard term for $3$). We may not always be very careful about terminologically or notationally differentiating between a number and the standard term for it.  

As a binary predicate of the variable $\cal X$ over HPMs and the variable $X$ over sentences, ``{\em $\cal X$ wins $X$}'' is not arithmetical (is not expressible in the language of $\pa$), for otherwise so would be the truth predicate for elementary sentences: such a sentence is true iff it is won by an HPM that makes no moves. 
 Remember from Section 14.3 of \cite{cla4} that, on the other hand, for any {\em particular}    sentence $X$, the (now unary) predicate ``$\cal X$ wins  $X$'' {\em is} arithmetical. Throughout this paper, for each sentence $X$, we assume the presence of a fixed elementary formula $\mathbb{W}^X(x)$ naturally representing such a predicate, and when  we say something like ``$\pa$ proves that $\cal X$ wins $X$'', what we precisely mean is that 
$\pa\vdash \mathbb{W}^X(\code{\cal X})$, where $\code{\cal X}$ is the standard term for the code of $\cal X$. Or, if we say ``$\pa$ proves that $X$ is computable'', what we precisely  mean is that $\pa\vdash \cle x\mathbb{W}^X(x)$.

The soundness proofs found in this article  rely on the following lemma. In it, a {\bf clarithmetical sequent} means a sequent $E_1,\ldots,E_n\intimpl E_0$ where each $E_i$ ($0\leq i\leq n$) is a sentence (of the language of $\arfour$). 
  
\begin{lemma}\label{j15a}   
 There is an  efficient procedure that takes an arbitrary \ $\cltw$-proof of an arbitrary clarithmetical sequent  {\em $E_1,\ldots,E_n\intimpl F$} 
 and constructs an $n$-ary  \mbox{GHPM}  $\cal M$   such that $\pa$ proves that, for any 
$n$-ary   \mbox{GHPMs}  ${\cal N}_1,\ldots,{\cal N}_n$,  if  each ${\cal N}_i(\code{{\cal N}_1},\ldots,\code{{\cal N}_n})$ ($1\leq i\leq n$) is a solution of $E_{i}$, then ${\cal M}(\code{{\cal N}_1},\ldots,\code{{\cal N}_n})$ is a solution of $F$. 
\end{lemma}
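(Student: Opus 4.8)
The plan is to induct on the structure of the given $\cltw$-proof, reusing the rule-by-rule strategy constructions that underlie the ordinary soundness proof of $\cltw$ in \cite{cla4}, but now carrying along three pieces of bookkeeping absent from the bare soundness statement: the parametric dependence of the antecedent strategies on the tuple of codes $\code{{\cal N}_1},\ldots,\code{{\cal N}_n}$, the efficiency of the proof-to-machine transformation, and --- the genuinely new demand --- the requirement that correctness be not merely true but provable in $\pa$. Accordingly, the procedure will output $\cal M$ together with (a description of) a $\pa$-derivation of the associated $\mathbb{W}^F$-statement, both assembled compositionally from the corresponding outputs for the immediate subproofs.

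First I would fix the uniform way in which $\cal M$ accesses its resource strategies. Since $\cal M$ is handed the whole tuple $\code{{\cal N}_1},\ldots,\code{{\cal N}_n}$, it can run each ${\cal N}_i(\code{{\cal N}_1},\ldots,\code{{\cal N}_n})$ as an internal subroutine playing the game $E_i$; the apparent self-reference is resolved simply by storing that tuple of codes and feeding the same copy of it into every simulated machine, so no fixed-point construction is needed at this level. With this convention, a $\cltw$-proof of $E_1,\ldots,E_n\intimpl F$ is read, in the standard CoL manner, as a recipe for a single interface strategy that plays $F$ in the real run while shuttling moves between $F$ and the simulated antecedent runs of $E_1,\ldots,E_n$.

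The inductive step walks through the rules of $\cltw$. For each rule the premises' machines are given by the induction hypothesis, and the conclusion's machine is obtained by the same local surgery on strategies that the original argument uses (prefixing a choice move, splitting or merging copy channels, and the like); each such surgery is an elementary manipulation of machine descriptions, so the whole map from proof to $\cal M$ runs in polynomial time, which yields efficiency. The substantive task is to certify each step inside $\pa$: for every rule I would isolate one local-correctness lemma, asserting that any legal run generated by the conclusion-machine projects to legal runs of the premise-games generated by the ${\cal N}$'s and is won whenever those are, and then check that this lemma is $\pa$-provable. Chaining these along the arithmetized proof tree, by a $\pa$-formalizable induction on its height, delivers the required $\mathbb{W}^F(\code{\cal M})$.

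The main obstacle is precisely the passage from ``true'' to ``$\pa$-provable''. Semantic soundness of $\cltw$ is a claim about entire (possibly infinite) runs and about the agreement of interacting machines over the whole course of play, whereas $\pa$ can access such objects only through their finite approximations. I expect to cope with this by noting that every property actually used --- legality of a finite position, the winning predicate $\mathbb{W}^X$, and the step-by-step faithfulness of $\cal M$'s simulation of each ${\cal N}_i(\code{{\cal N}_1},\ldots,\code{{\cal N}_n})$ --- is elementary and decidable, so that the correctness of each rule's surgery becomes a $\pa$-provable assertion about finite computation histories. The delicate point is the faithfulness of the copy synchronization between $F$ and the $E_i$ for arbitrarily long positions: showing, provably in $\pa$, that $\cal M$'s internal simulation of a resource machine never diverges from that machine's genuine behaviour. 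This is exactly the kind of statement about coupled finite computations that $\pa$-induction is equipped to certify, and establishing it uniformly across all the rules of $\cltw$ is where the bulk of the effort goes.
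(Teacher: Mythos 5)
Your proposal is correct and follows essentially the same route as the paper: the paper's own proof simply observes that, with the ``$\pa$ proves that'' clause deleted, the statement is a weakened form of Theorem 10.1 of \cite{cla4} (= Theorem 10.5 of \cite{Japlbcs}), and that an analysis of that theorem's rule-by-rule proof shows it formalizes in $\pa$ --- which is precisely the induction-on-the-proof, certify-each-rule-locally-in-$\pa$ plan you describe. Your sketch is in fact more explicit than the paper's two-sentence proof about where the formalization effort lies (finite computation histories, simulation faithfulness), but it is the same argument.
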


\begin{proof} If the  phrase ``$\pa$ proves that'' is deleted in the present lemma, we get nothing but  
a weak/simplified version of Theorem 10.1 of  \cite{cla4}, which, in turn, is a reproduction of Theorem 10.5 of \cite{Japlbcs}.  An
analysis of the proof of the latter, combined with some basic experience in working with $\pa$, reveals that the latter can be formalized in $\pa$ in 
the form required by our present lemma.   
\end{proof}

\section{$\areight$, a theory of {\bf PA}-provably recursive time computability}\label{ss11}

The language of each of the theories $\areight$, $\arnine$ and $\arten$ introduced in this paper is the same as that of any other system of clarithmetic constructed in the present series of articles --- that is, it is an extension of the language of $\pa$ through the additional binary connectives $\adc,\add$ and quantifiers $\ada,\ade$.  

The axiomatization of $\areight$ is obtained from that of $\arseven$ by adding the single new rule of inference, which we call {\bf Finite Search} ({\bf FS}): 
\[\frac{\ada \bigl(F(x)\add\gneg F(x)\bigr)\hspace{30pt} \ada\cle x F(x)}{\ada\ade x F(x)},\]
where $F(x)$ is any  elementary formula.\footnote{The forthcoming soundness theorem for $\areight$  would just as well go through without the requirement that $F(x)$ is elementary. But why bother: completeness can be achieved even with the present, restricted, form of FS.}

To summarize, the nonlogical axioms  of $\areight$ are those of $\pa$ (Axioms 1-7 from Section 11 of \cite{cla4}) plus one single additional axiom $\ada x\ade y(y= x\successor)$ (Axiom 8). There are no logical axioms. The only logical rule of inference  is Logical Consequence (LC) as defined in Section   10 of \cite{cla4}, and the only nonlogical rules of inference are FS and $\arseven$-Induction 
\[\frac{\ada F(0)\hspace{30pt} \ada\bigl(F(x)\mli F(x\successor)\bigr)}{\ada F(x)}.\]

We fix 
\[\mathbb{T}(x,y,z,t)\]
as a standard formula of the language of $\pa$ saying that Turing machine (encoded by) $x$, on input $y$, at computation step $z$, halts with output $t$. By an {\bf explicit $\pa$-provably recursive function} we mean a natural number $\tau$ such that, where $\code{\tau}$ is the standard term for it,  $\pa\vdash \cla y\cle z\cle t \mathbb{T}(\code{\tau},y,z,t)$. Context permitting, we usually identify such a number $\tau$ or term $\code{\tau}$ with the  (unary) function  computed by the Turing machine encoded by $\tau$. When $\tau$ is an explicit $\pa$-provably recursive function and a given HPM $\cal M$ runs in time $\tau$, we say that $\tau$ is an {\bf explicit $\pa$-provably recursive bound} for the time complexity of $\cal M$;  whenever such a $\tau$ exists, we say that 
$\cal M$ is a {\bf $\pa$-provably recursive time machine}.

\begin{theorem}\label{ptt1}
An arithmetical problem has a $\pa$-provably recursive time solution iff it is provable in $\areight$. 

Furthermore, there is an efficient procedure that takes an arbitrary extended $\areight$-proof of an arbitrary sentence $X$ and constructs a   
 solution of $X$ (of $X^\dagger$, that is) together with an explicit $\pa$-provably recursive bound for its time complexity. 
\end{theorem}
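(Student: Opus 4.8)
The plan is to prove the two directions separately, following the template of the corresponding soundness/completeness theorem for $\arseven$ established in \cite{cla5}, and isolating the genuinely new ingredient contributed by the rule FS.

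For soundness (the ``if'' direction together with the furthermore clause), I would argue by induction on an extended $\areight$-proof of $X$, simultaneously constructing a solution and an explicit $\pa$-provably recursive time bound, and verifying that $\pa$ proves the solution adequate. The base cases and the rules inherited from $\arseven$ are treated exactly as there: the Peano axioms and Axiom 8 are, respectively, true elementary sentences and the trivially computable $\ada x\ade y(y\equals x\successor)$, all won in constant time; each application of LC is handled by the efficient, $\pa$-formalizable procedure of Lemma \ref{j15a}, which turns solutions of the premises into a solution of the conclusion while keeping the time bound $\pa$-provably recursive; and $\arseven$-Induction is handled by iterating the machine for the step premise, the resulting bound staying provably recursive because this class of functions is closed under the relevant primitive-recursive-style iteration. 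The one new case is FS. Here the decision premise $\ada(F(x)\add\gneg F(x))$ supplies, by the induction hypothesis, a solution running in some explicit $\pa$-provably recursive time, and the existence premise $\ada\cle xF(x)$, being a provable elementary sentence, is a $\pa$-theorem. The crucial point is that the least-witness function $\vec{y}\mapsto\mu x\,F(x,\vec{y})$ is then $\pa$-provably recursive: since $F$ is decidable and $\pa\vdash\cla\vec{y}\cle xF(x,\vec{y})$, $\pa$ proves this function total, and it is recursive. The FS-strategy for $\ada\ade xF(x)$ simply runs the decision procedure on $F(0),F(1),\ldots$ until it succeeds; bounding its running time by summing the decision-procedure cost up to the least witness yields a $\pa$-provably recursive bound, and the whole analysis formalizes in $\pa$.

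For completeness (the ``only if'' direction), I would start from an HPM $\cal M$ that solves $X^\dagger$ together with an explicit $\pa$-provably recursive bound $\tau$, so that $\pa\vdash\cla y\cle z\cle t\,\mathbb{T}(\code{\tau},y,z,t)$, and construct an $\areight$-proof of $X$. The argument reuses the completeness construction for $\arseven$, whose only obstruction in the present setting is that $\tau$ need not be given by an explicit term or primitive-recursive definition but is merely $\pa$-provably recursive. FS is precisely the device that removes this obstruction: applying FS with the decidable formula expressing ``$w$ codes a halting computation $(z,t)$ of $\code{\tau}$ on $y$'' --- whose existence over all $y$ is, by hypothesis, a $\pa$-theorem, supplying the right premise --- yields, inside $\areight$, the ability to compute the halting time and output of $\tau$. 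With $\tau$ thus internally computable, the remainder of the proof --- expressing the run of $\cal M$ for $\tau$-many steps and verifying, via $\arseven$-Induction and LC, that this run wins $X^\dagger$ --- proceeds as in the $\arseven$ case.

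I expect the main obstacle to be the FS case of soundness, and specifically the metatheoretic fact that a $\pa$-provable $\Pi_2$ totality statement over a decidable matrix yields a $\pa$-provably recursive witness function, together with the bookkeeping needed to carry the time analysis of the unbounded search out inside $\pa$. On the completeness side, the delicate point is deploying FS so as to internalize $\tau$ while preserving the efficiency claim, which ultimately rests on the efficiency of the procedure of Lemma \ref{j15a}.
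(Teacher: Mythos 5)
Your proposal is correct and follows essentially the same route as the paper: soundness by induction on the proof with the FS case handled by unbounded search through the decision procedure (the time bound being provably recursive because the whole analysis, including the totality of the least-witness function, formalizes in $\pa$), and completeness by using FS on the Kleene $\mathbb{T}$-predicate (with a pairing trick to merge the step-count and output existentials) to derive $\ade z(z\equals\chi(x))$ inside $\areight$ and then reusing the predecessor system's completeness construction. The only cosmetic difference is that the paper reuses the $\arsix$ completeness proof from Section 7 of \cite{cla5} rather than the $\arseven$ one, and is somewhat terser about the time-bound bookkeeping in the FS soundness case.
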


\subsection{Proof of the soundness part of Theorem \ref{ptt1}}

Consider any sentence $X$ with a fixed extended $\areight$-proof. Our goal is to  construct a $\pa$-provably recursive time HPM $\cal M$ such that $\cal M$ wins $X$ and, furthermore, $\pa$ proves that $\cal M$ wins $X$. We shall not explicitly address the question on the efficiency of our construction, because it is achieved the same way as in all previous soundness proofs. That is, as in the similar proofs of \cite{cla5},  we will limit ourselves to verifying the pre-``furthermore'' part of the theorem. 

We proceed by induction on the length of the proof of $X$. In each case of our induction, we show how to construct the above-mentioned HPM $\cal M$, together with a recursive function $\tau$, and present an informal  proof of the fact that $\cal M$ solves $X$ in time $\tau$. 
A reader sufficiently familiar with $\pa$ will immediately see that such a proof can be reproduced in $\pa$.

The case of $X$ being an axiom is simple and is handled in the same way as in the earlier soundness proofs. 
So is the case of $X$ being derived by LC, only now it relies on Lemma \ref{j15a} instead of the earlier relied-upon Theorem 10.1 of \cite{cla4} (= Theorem 10.5 of \cite{Japlbcs}). The case of $X$ being obtained by $\arseven$-Induction is essentially handled in the way as  in the soundness proof for $\arseven$ found in \cite{cla5}, only with the words ``primitive recursive'' replaced by ``$\pa$-provably recursive''.

So, the only case worth considering is that of $X$ being derived by FS. Assume $X$ is (the $\ada$-closure) of $\ade xF(x)$, and thus its premises are (the $\ada$-closures of) $F(x)\add \gneg F(x)$ and $\cle xF(x)$. 
By the induction hypothesis, there is an HPM ${\cal N}$ that solves 
$F(x)\add \gneg F(x)$. Similarly for the other premise $\cle xF(x)$ but, since the latter is elementary, its solvability (that is, the solvability of 
$\ada \cle xF(x)$) simply means that  $\cla \cle xF(x)$ is true.

To describe our purported solution $\cal M$ of $\ade xF(x)$, assume $x,\vec{v}$ are exactly the free variables of $F(x)$, so that   $F(x)$ can be rewritten as $F(x,\vec{v})$. For simplicity, we rule out the trivial case of $F(x)$ having no free occurrences of $x$. 
At the beginning,   $\cal M$  waits for Environment to choose constants for the free variables $\vec{v}$ of $\ade x F(x,\vec{v})$.   Assume $\vec{c}$ are the constants chosen for $\vec{v}$. From now on, we shall write $F'(x)$ for $F(x,\vec{c})$. Further, where $i$ is a natural number, we shall write ${\cal N}_{i}$ for the machine that works just like ${\cal N}$ does in the scenario where the adversary, at the beginning of the play, has chosen the constant $i$ for the variable $x$ and the constants $\vec{c}$ for the variables $\vec{v}$. So, ${\cal N}_{i}$ wins the constant game $F'(i)\add\gneg F'(i)$.

Environment's initial moves bring the original $\ada \ade xF(x)$ down to $\ade xF'(x)$. The goal of $\cal M$ now is to win $\ade xF'(x)$. It achieves this goal by creating a record $i$, initializing it to $0$, and then acting as prescribed by the following procedure:\vspace{5pt}

{\bf Procedure} LOOP: Simulate ${\cal N}_i$ until it chooses one of the two $\add$-disjuncts of $F'(i)\add\gneg F'(i)$. If the right disjunct is chosen, increment $i$ by $1$ and repeat LOOP. Otherwise, if the left disjunct is chosen, specify $x$ as $i$ in the (real) play of $\ade xF'(x)$, and retire. \vspace{5pt}

Since $\cle x F'(x)$ is true, sooner or later the above procedure hits an $i$ such that the simulated ${\cal N}_i$ chooses the left disjunct of $F'(i)\add\gneg F'(i)$, meaning that $F'(i)$ is true. This guarantees that $\cal M$ wins. A bound  $\tau$ for the time complexity of $\cal M$ is computed  by a Turing machine that follows the work of $\cal M$ and counts the steps that it performs before making a move. Such a function $\tau$ is $\pa$-provably recursive because, as already noted, our entire argument can be reproduced in $\pa$.
 
\subsection{Proof of the completeness part of Theorem \ref{ptt1}}
Consider an arbitrary sentence $X$, an arbitrary HPM $\cal X$, and an arbitrary explicit $\pa$-provably recursive function $\chi$ such that $\cal X$ is a $\chi$ time solution of $X$. Let $\code{\chi}$ be the standard term for (the code of) $\chi$. We fix some enumeration of pairs of natural numbers and, where $a$ is a natural number, denote the first (resp. second) element of the $a$th pair by $(a)_1$ (resp. $(a)_2$). We treat $(x)_1$, $(x)_2$ as pseudoterms and assume that the above enumeration is ``standard enough'', so that the functions $(x)_1$ and $(x)_2$ are primitive recursive and $\pa$ proves  
\begin{equation}\label{jul19b}
\cla \Bigl(\cle z_1\cle z_2\mathbb{T}\bigl(x,y,z_1,z_2\bigr)\leftrightarrow \cle z\mathbb{T}\bigl(x,y,(z)_1,(z)_2\bigr)\Bigr)
\end{equation}
($E\leftrightarrow F$ abbreviates $(E\mli F)\mlc (F\mli E)$). The function $\chi$ will also be treated as a pseudoterm. Namely, if we write $z=\chi(x)$ within a formal expression, it is to be understood as an abbreviation of $\cle y\mathbb{T}\bigl(\code{\chi},x,y,z)$. 

That $\chi$ is an explicit $\pa$-provably recursive function, by definition, means that $\pa$ proves 
\begin{equation}\label{jul19c}
\cla y\cle z_1\cle z_2\mathbb{T}(\code{\chi},y,z_1,z_2).
\end{equation}   
In $\areight$, from (\ref{jul19b}) and (\ref{jul19c}), by LC we  get
\begin{equation}\label{j13a}
 \ada \cle z\mathbb{T}\bigl(\code{\chi},y,(z)_1,(z)_2\bigr).
\end{equation} 

It is obvious that $\pa$ constructively proves (in the sense of Section 11 of \cite{cla5}) the primitive recursive time computability of 
\begin{equation}\label{j13b}
\ada \Bigl(\mathbb{T}\bigl(\code{\chi},y,(z)_1,(z)_2\bigr)\add\gneg \mathbb{T}\bigl(\code{\chi},y,(z)_1,(z)_2\bigr)\Bigr).
\end{equation}
Therefore, by Theorem 11.2 of \cite{cla5}, $\arseven$ proves (\ref{j13b}), and hence so does $\areight$ because the latter is an extension of the former.

From  (\ref{j13b}) and (\ref{j13a}), by FS, we get $\ada \ade z \mathbb{T}\bigl(\code{\chi},y,(z)_1,(z)_2\bigr)$ which, together with (\ref{jul19b}), by LC, can be easily seen to imply 
\begin{equation}\label{a2a}
\areight\vdash \ade z \bigl(z= \chi(x)\bigr).
\end{equation}

The rest of our completeness proof for $\areight$ is literally the same as the completeness proof for $\arsix$ found in Section 7 of \cite{cla5}, with the only difference that now $\chi$ is a $\pa$-provably recursive (rather than elementary recursive) function; also, where Section 7 of \cite{cla5} relied on Fact 7.2, now we  rely on (\ref{a2a}) instead.

\subsection{The intensional strength of $\areight$} 
We say that $\pa$ {\bf constructively proves the $\pa$-provably recursive time computability of} a sentence $X$ iff, for some particular HPM $\cal X$ and some particular explicit $\pa$-provably recursive function $\chi$, $\pa$ proves that $\cal X$ is   a $\chi$-time   solution of $X$.  

The following theorem holds for virtually the same reasons as the similar Theorem 16.2 of \cite{cla4} for $\arfour$ or Theorem 11.2 of \cite{cla5} for $\arfive$, $\arsix$ and $\arseven$:

\begin{theorem}\label{jan30}
Let $X$ be any sentence  such that $\pa$ constructively proves  the $\pa$-provably recursive time computability of $X$. Then $\areight$ proves $X$. 
\end{theorem}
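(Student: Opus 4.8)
The plan is to obtain Theorem \ref{jan30} as a direct consequence of the completeness part of Theorem \ref{ptt1}, the bridge between the two being nothing more than the arithmetical soundness of $\pa$. First I would unwind the hypothesis. By the definition given just above, $\pa$ constructively proving the $\pa$-provably recursive time computability of $X$ means that there are a particular HPM $\cal X$ and a particular explicit $\pa$-provably recursive function $\chi$ for which $\pa$ proves that $\cal X$ is a $\chi$-time solution of $X$; unwinding the conventions of the preliminaries, this amounts to $\pa$ proving the elementary sentence that asserts both $\mathbb{W}^X(\code{\cal X})$ and that $\cal X$ runs within time $\chi$.

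The crux is the observation that this intensional hypothesis is at least as strong as the (extensional) hypothesis of the completeness part of Theorem \ref{ptt1}. Since every theorem of $\pa$ is true in the standard model, and since the relevant formulas were fixed so as to naturally represent the ``wins'' and ``runs in time'' predicates, the fact that $\pa$ proves ``$\cal X$ is a $\chi$-time solution of $X$'' entails that this statement is simply true. Thus $\cal X$ genuinely is a $\chi$-time solution of $X$, and $\chi$ is, by assumption, an explicit $\pa$-provably recursive function. But this is exactly what it means for the arithmetical problem $X^\dagger$ to have a $\pa$-provably recursive time solution.

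It now suffices to invoke the already-established completeness part of Theorem \ref{ptt1}, which yields $\areight\vdash X$. I anticipate no genuine obstacle: the whole argument rests on the remark that the completeness construction never appeals to $\pa$'s proof of the winning and timing facts --- it uses only their truth, together with the explicit $\pa$-provable recursiveness of $\chi$, which is assumed outright. Passing from ``$\pa$ proves that $\cal X$ solves $X$ in time $\chi$'' to the weaker ``$\cal X$ solves $X$ in time $\chi$'' therefore discards nothing the construction requires, and this is precisely the pattern by which the analogous intensional-strength statements, Theorem 16.2 of \cite{cla4} and Theorem 11.2 of \cite{cla5}, are derived from their respective completeness theorems.
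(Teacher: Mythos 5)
There is a genuine gap at the final step, where you invoke the completeness part of Theorem \ref{ptt1} to conclude $\areight\vdash X$. That completeness statement is \emph{extensional}: it says that an arithmetical problem with a $\pa$-provably recursive time solution is \emph{represented by some theorem} of $\areight$, not that every particular \emph{sentence} whose associated problem has such a solution is itself provable. (The paper's footnote on extensional versus intensional concepts, in the section on $\arnine$, states outright that the intensional counterparts of these completeness parts fail.) Your argument deliberately weakens ``$\pa$ proves that $\cal X$ is a $\chi$-time solution of $X$'' to the mere truth of that statement, and after that weakening the conclusion $\areight\vdash X$ is unavailable. Concretely, let $S$ be G\"{o}del's sentence and $X=S\add\gneg S$. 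The machine that chooses the left disjunct is a constant-time (hence $\pa$-provably recursive time) solution of $X$, so your chain of reasoning would deliver $\areight\vdash S\add\gneg S$; but the constructive soundness of $\areight$ would then produce a particular machine that $\pa$ provably wins $S\add\gneg S$, which forces $\pa\vdash S$ or $\pa\vdash\gneg S$ --- a contradiction. This $X$ does not refute Theorem \ref{jan30} itself, because no particular machine is $\pa$-provably a solution of it; it refutes only your derivation, which never actually uses that extra, intensional hypothesis.

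The intended argument --- the one behind Theorem 16.2 of \cite{cla4} and Theorem 11.2 of \cite{cla5}, to which the paper appeals --- reuses the \emph{internal} machinery of the completeness proof rather than its extensional conclusion. What that machinery really establishes is a provable implication of the form $W\mli X$ in $\areight$, where $W$ is the (elementary) sentence asserting that $\cal X$ is a $\chi$-time solution of $X$; compare Lemma \ref{july} and the closing paragraph of Section \ref{s19}, where exactly this pattern is made explicit for $\arnine$: one first shows $\arnine\vdash\gneg\mathbb{L}\mli X$ without any assumptions on $\cal X$, and only afterwards uses $\pa\vdash\gneg\mathbb{L}$. Under the hypothesis of Theorem \ref{jan30}, $\pa$ \emph{proves} $W$; since $\areight$ extends $\pa$, we get $\areight\vdash W$, and LC then yields $\areight\vdash X$. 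So the $\pa$-provability of the winning-and-timing statement is not disposable bookkeeping: it is precisely what converts the extensional conclusion into the intensional one, and your proposal discards it at the outset.
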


\begin{remark} From our soundness proof for $\areight$ it is immediately clear that the above theorem, in fact, holds in the stronger, ``if and only if'' form. The same can be seen to be the case for  Theorem 16.2 of \cite{cla4} and Theorem 11.2 of \cite{cla5}.
\end{remark}

\section{$\arnine$, a theory of constructively $\pa$-provable computability}\label{ss12}

Deductively, $\arnine$ only differs from $\areight$ in that, instead of the FS rule of the latter, $\arnine$ has the following rule, which we call  {\bf Infinite Search} ({\bf IS}):

\[\frac{\ada \bigl(F(x)\add \gneg F(x)\bigr)}{\ada \bigl(\cle xF(x)\mli \ade xF(x)\bigr)},\]
where $F(x)$ is any elementary formula.

Let $X$ be a sentence. We say that $\pa$ {\bf constructively proves the computability of $X$} iff, for some HPM $\cal X$, $\pa$ proves that $\cal X$ wins $X$.

Notice that, while our earlier defined concepts of polynomial, elementary recursive, primitive recursive and $\pa$-provably recursive time (or space) computabilities are {\em extensional} in their nature, the concept of constructively $\pa$-provable computability is {\em intensional}. To be precise, the former are properties of computational {\em problems} while the latter is a property of {\em sentences}. An extensional version of this concept could be defined by saying that a computational problem $A$ is constructively $\pa$-provably computable in the extensional sense iff there is a sentence $X$ with $X^\dagger=A$ ($X$ ``represents'' $A$) such that $X$ is constructively $\pa$-provably computable in the intensional sense. The forthcoming Theorem \ref{mainth} can be easily seen to continue to hold after replacing the intensional concept of constructively $\pa$-provable computability by its extensional counterpart.\footnote{On the other hand, Theorem \ref{ptt1} and similar theorems from \cite{cla4,cla5} --- namely, their completeness parts --- would fail with the intensional counterparts of the corresponding extensional concepts of computability.} Yet, doing so would significantly and unnecessarily weaken the theorem. This is the reason why, in the present context, we have opted for only considering the intensional concept.  
A similar comment applies to the concept of (not-necessarily-constructively) $\pa$-provable computability defined later in Section \ref{ss12}.

\begin{theorem}\label{mainth}
For any sentence $X$, $\arnine$ proves $X$ iff $\pa$ constructively proves the computability of $X$.

Furthermore, there is an efficient procedure that takes an arbitrary extended $\arnine$-proof of an arbitrary sentence $X$ and constructs an HPM $\cal X$ such that $\pa$ proves that $\cal X$ is a solution of $X$ (of $X^\dagger$, that is).
\end{theorem}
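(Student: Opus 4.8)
The plan is to establish the two directions of the biconditional separately, with the ``furthermore'' clause falling out of the forward direction. For the forward direction, i.e. that $\arnine\vdash X$ yields an HPM $\cal X$ for which $\pa$ proves $\cal X$ wins $X$, I would argue by induction on the length of the given extended $\arnine$-proof, exactly as in the soundness proof of Theorem \ref{ptt1}. The cases in which $X$ is an axiom, is derived by LC (via Lemma \ref{j15a}), or is derived by $\arseven$-Induction are handled verbatim, since $\arnine$ differs from $\areight$ only in replacing FS by IS. So the one genuinely new case is that of $X$ being the $\ada$-closure of $\cle xF(x)\mli\ade xF(x)$, obtained by IS from the $\ada$-closure of $F(x)\add\gneg F(x)$. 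By the induction hypothesis I have an HPM $\cal N$ with $\pa$ proving that $\cal N$ wins $\ada\bigl(F(x)\add\gneg F(x)\bigr)$. I would then build $\cal M$ which, after the adversary supplies constants $\vec c$ for the remaining free variables $\vec v$ (abbreviating $F(x,\vec c)$ as $F'(x)$ and defining ${\cal N}_i$ from $\cal N$ as in the proof of Theorem \ref{ptt1}), runs the unbounded LOOP from the justification of IS in the Introduction: for $i=0,1,2,\ldots$ it simulates ${\cal N}_i$ on $F'(i)\add\gneg F'(i)$, incrementing $i$ on the right disjunct, and on the left disjunct choosing $i$ for $x$ in the consequent and retiring.

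The essential difference from the FS case is that there is no second premise $\ada\cle xF(x)$ forcing termination, and correspondingly I claim no recursive time bound --- only that $\pa$ proves $\cal M$ wins. The winning argument, which I would reproduce in $\pa$, splits on whether $\cle xF'(x)$ is true: if it is, then since $\cal N$ correctly decides $F$, the LOOP reaches the least witness $n$ and $\cal M$ wins by bringing the game down to the true $F'(n)$; if it is false, then the antecedent of $\cle xF'(x)\mli\ade xF'(x)$ is false, so the implication is won no matter what $\cal M$ does, in particular even if the LOOP never terminates and $\cal M$ never moves. Both subcases follow routinely in $\pa$ from the induction hypothesis about $\cal N$, which is all the forward direction needs.

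For the backward direction (completeness), I would mirror the completeness proof of Theorem \ref{ptt1}, with IS taking over the role that the $\pa$-provably recursive time bound $\chi$ and fact (\ref{a2a}) played there. Given an HPM $\cal X$ with $\pa\vdash\mathbb{W}^X(\code{\cal X})$, the difficulty is that $\cal X$ need not run in recursive time, so I cannot introduce a bound $\chi$ and locate $\cal X$'s moves by bounded search. Instead I would fix a standard elementary formula $\theta(s,t)$ expressing that, in the play scenario encoded by $s$, $\cal X$ makes its next move at computation step $t$, prove its $\ada$-closed disjunction with its negation in $\arseven$ (hence in $\arnine$) exactly as (\ref{j13b}) is obtained via Theorem 11.2 of \cite{cla5}, and then apply IS to obtain the $\arnine$-theorem $\ada\bigl(\cle t\,\theta(s,t)\mli\ade t\,\theta(s,t)\bigr)$. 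This derived sentence is the $\arnine$-analogue of (\ref{a2a}): it lets $\arnine$ compute the running time on the fly, locating each move of $\cal X$ by unbounded search when such a move exists, and harmlessly failing to move when it does not --- which matches precisely the scenarios in which $\cal X$ itself wins by inaction. With this fact in hand, the remainder of internalizing $\cal X$'s strategy move by move to conclude $\arnine\vdash X$ would proceed as in the completeness arguments of Theorem \ref{ptt1} and Section 7 of \cite{cla5}.

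I expect the main obstacle to lie in the completeness direction, and specifically in the faithful replacement of bounded search by IS-driven unbounded search. The template inherited from $\arsix$ and $\areight$ implicitly relies on every relevant search terminating within the known bound, whereas here a search may diverge; I must verify that each divergence occurs only in a play that $\cal X$, and hence the simulating $\arnine$-strategy, wins by inaction, so that the vacuous-antecedent clause supplied by IS covers exactly these cases. The supporting point is that $\pa\vdash\mathbb{W}^X(\code{\cal X})$ guarantees, provably in $\pa$, that in every other scenario $\cal X$ does make the required move, so the corresponding $\cle t\,\theta(s,t)$ holds and the IS-conclusion detaches to yield the witness. Confirming that this bookkeeping remains formalizable in $\pa$ throughout is the step I would scrutinize most carefully.
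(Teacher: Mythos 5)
Your soundness direction coincides with the paper's: the same induction on proof length, with the IS case handled by the same unbounded LOOP and the same two-way case split on the truth of $\cle xF'(x)$. For completeness your key insight is also the paper's --- the bounded move-search of the $\areight$/$\arsix$ template gets replaced by an IS-supplied unbounded search for $\cal X$'s next move, with divergence occurring only in positions that $\cal X$ wins by inaction --- but the packaging differs in a way worth noting. The paper does not thread the hypothesis $\pa\vdash\mathbb{W}^X(\code{\cal X})$ through the simulation; instead it internalizes winnability as an explicit antecedent, proving $\arnine\vdash\gneg\mathbb{L}\mli X$ (Lemma \ref{july}) for an \emph{arbitrary} HPM $\cal X$ with no assumption that $\cal X$ wins, via an induction on the complexity of formulas (Lemmas \ref{m2a} and \ref{m2c}) using the intensional-yield predicates $\tilde{E}(z,\vec{s})$ and $(\oo,y)$-/$(\pp,y)$-developments, and only detaches $\gneg\mathbb{L}$ at the very end. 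This buys two things your sketch does not: first, the relativized form generalizes to (\ref{julyb}) with the machine as a $\ada$-bound variable, which is exactly what Section 5 needs for $\arten$ and would not fall out of a proof that fixes $\cal X$ and uses $\pa\vdash\gneg\mathbb{L}$ throughout; second, it forces the one genuinely delicate piece of bookkeeping into the open, namely that the IS-supplied resource (\ref{may18a}) is conditional on $\gneg\elz{E(\vec{a})}$, so its provider may return a \emph{spurious} witness when the current elementarization is already true --- the paper's Event 1/Event 2 analysis uses the decidability of $\tilde{H}_i$ (Lemma \ref{m2a}(a)) to detect and safely ignore such witnesses while remaining responsive to Environment. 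Your closing paragraph correctly identifies this interleaving as the point to scrutinize, but leaves it as a promissory note; it is precisely where the paper spends its effort, and your plan would need that case analysis spelled out to be complete.
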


\subsection{Proof of the soundness part of Theorem \ref{mainth}}\label{sectsound}

As in the earlier soundness proofs, we will limit ourselves to verifying the pre-``furthermore'' part. Consider any sentence $X$ with a fixed extended $\arnine$-proof.  We proceed by induction on the length of the proof of $X$. In each case of the induction, we (show how to) construct an HPM $\cal M$ and present an informal  proof of the fact that $\cal M$ solves $X$. It will be immediately clear that  such a proof can be reproduced in $\pa$.

The case of $X$ being an axiom is simple and is handled as in the soundness proof for $\areight$. 
So is the case of $X$ being derived by LC.
So is the case of $X$ being derived by $\arseven$-Induction --- namely, it is essentially handled in the same way as in the soundness proof for $\arseven$ found in \cite{cla5}, but is, in fact, simpler, because we no longer need to care about complexity. 

Finally, assume $X$ is (the $\ada$-closure of) $\cle xF(x)\mli \ade xF(x)$, obtained by IS from (the $\ada$-closure of)  $F(x)\add \gneg F(x)$. By the induction hypothesis, we know how to construct a solution $\cal N$ of  $F(x)\add \gneg F(x)$. Here we assume that $x$  occurs free in $F(x)$ (otherwise the case is straightforward), and that $\vec{v}$ are all the additional free variables of $F(x)$. So, $F(x)$ can be rewritten as $F(x,\vec{v})$. We let $\cal M$ --- the purported solution of $\cle xF(x)\mli \ade xF(x)$ --- be a machine that, at the beginning of the play, waits till Environment chooses constants $\vec{c}$ for the variables $\vec{v}$. 
Let $F'(x)$ stand for $F(x,\vec{c})$ and, for each natural number $i$, let ${\cal N}_i$ be an HPM that plays $F(x)\add \gneg F(x)$ just as $\cal N$ does in the scenario where, at the very beginning of the play, the adversary chose the constant $i$ for the variable $x$ and the constants $\vec{c}$ for the variables $\vec{v}$. 

After the above event of Environment having chosen constants for all free variables, thus having brought the original game $\ada \bigl(\cle xF(x)\mli\ade xF(x)\bigr)$ down to $\cle xF'(x)\mli\ade xF'(x)$,  $\cal M$ creates a record $i$, initializes it to $0$, and then acts as prescribed by the following procedure:\vspace{5pt}

{\bf Procedure} LOOP: Simulate ${\cal N}_i$ until it chooses one of the two $\add$-disjuncts of $F'(i)\add\gneg F'(i)$. If the right disjunct is chosen, increment $i$ by $1$ and repeat LOOP. Otherwise, if the left disjunct is chosen, specify $x$ as $i$ in the consequent of $\cle xF'(x)\mli \ade xF'(x)$,  and retire. \vspace{5pt}

If $\cle x F'(x)$ is false, $\cle xF'(x)\mli \ade xF'(x)$ is (automatically) won by $\cal M$. And if $\cle x F'(x)$ is true, sooner or later the above procedure hits an $i$ such that the simulated ${\cal N}_i$ chooses the left disjunct of $F'(i)\add\gneg F'(i)$; then, again,  $\cal M$ wins, because it brings the consequent of $\cle xF'(x)\mli \ade xF'(x)$ down to the true $F'(i)$. 

\subsection{Proof of the completeness part of Theorem \ref{mainth}}\label{s19}

Consider an arbitrary sentence $X$ and an arbitrary HPM $\cal X$. Let $\mathbb{L}$ be an elementary sentence saying ``$\cal X$ does not win $X$'', so that $\gneg \mathbb{L}$ says ``$\cal X$ wins $X$''. Our intermediate --- and main --- goal is to show that $\arnine\vdash \gneg \mathbb{L}\mli X$ (Lemma \ref{july}), from which the desired completeness of $\arnine$ follows almost immediately. For the purposes of the subsequent section, it is important to note that, at this point, we are not making any assumptions about $X$ and $\cal X$.
In particular, we are {\em not} assuming that $\cal X$ wins $X$, let alone that $\pa$ proves so; such an assumption will be made only later, {\em after} Lemma \ref{july} is proven.

By a {\bf computation history} we shall mean a finite initial segment of some computation branch of $\cal X$. 
The way we encode configurations of $\cal X$ is described in Appendix A.1 of \cite{cla4}. That encoding extends to computation histories as finite sequences of configurations in a standard way. 
For readability, we will often identify configuration histories with their codes and say something like ``$a$ is a computation history'' when what is precisely meant is ``$a$ is the code of a computation history''.  We may further identify such an $a$ with the standard term for it. 
 
Remember from Section 14.5 of \cite{cla4} that, where $\Phi$ is a legal position of $X$, the {\bf yield} of $\Phi$ means the game $\seq{\Phi}X$. So, the type of the ``yield'' function in  \cite{cla4} is $\{\mbox{\em positions}\}\times\{\mbox{\em games}\}\longrightarrow \{\mbox{\em games}\}$. Here, for safety, we need an ``intensional'' version of this concept/function, whose type is $\{\mbox{\em positions}\}\times\{\mbox{\em $\cltw$-formulas}\}\longrightarrow \{\mbox{\em $\cltw$-formulas}\}$. Namely, let $F$ be a closed $\cltw$-formula containing no predicate letters other than $=$, and no function letters other than $\successor,+,\times$ ($F$ is not necessarily a sentence of the language of $\arnine$ because it may contain constants other than $0$). And let $\Phi$ be a legal position of $F$ (of $F^\dagger$, that is). Then, in the context of $F$, the intensional version of the yield  of $\Phi$, denoted by $\seq{\Phi}!F$, is defined inductively as follows:
\begin{itemize}
  \item $\seq{}!F=F$ (remember that $\seq{}$ means the empty run).
  \item For any nonempty legal position $\seq{\lambda,\Psi}$ of $F$ (where $\lambda$ is a labmove and $\Psi$ is a sequence of labmoves): 
  \begin{itemize}  
  \item If $\lambda$ signifies a choice of a component $G_i$ in an occurrence of a subformula $G_0\add G_1$ or $G_0\adc G_1$ of $F$, and $F'$ is the result of replacing that occurrence by $G_i$ in $F$, then $\seq{\lambda,\Psi}! F=\seq{\Psi}!F'$.
  \item If $\lambda$ signifies a choice of a constant $c$ for a variable $x$ in an occurrence of a subformula $\ade xG(x)$ or $\ada xG(x)$ of $F$, and $F'$ is the result of replacing that occurrence by $G(c)$ in $F$, then $\seq{\lambda,\Psi}! F=\seq{\Psi}!F'$.  
\end{itemize}  
\end{itemize}

Let $E(\vec{s})$ be a   formula all of whose  free variables are among  $\vec{s}$ (but not necessarily vice versa), and let $z$ be a variable not among $\vec{s}$. We 
 will write   $\tilde{E}(z,\vec{s})$ 
to denote an elementary formula whose free variables are $z$ and those of $E(\vec{s})$,  and which is a natural arithmetization of the predicate that,  for any constants $a,\vec{c}$ in the roles of $z,\vec{s}$, holds (that is, $\tilde{E}(a,\vec{c})$ is true) iff $a$ is a computation history and, where $\Phi$ is the position spelled on the run tape of the last configuration of that history, $\Phi$ is a legal 
position of $X$ with  $\seq{\Phi}!X=E(\vec{c})$.

Let $y$ be a variable and $E$ be a formula not containing $y$. As in Section 14.6 of \cite{cla4}, we say that a formula $H$ is a {\bf $(\oo,y)$-development} of  $E$ iff $H$ is the result of replacing in $E$: 
\begin{itemize}
\item either a surface occurrence of a subformula $F_0\adc F_1$ by $F_i$ ($i=0$ or $i=1$), 
\item  or  a surface occurrence of a subformula $\ada xF(x)$ by $F(y)$.   
\end{itemize}

{\bf $(\pp ,y)$-development} is defined in the same way, only with $\add,\ade$ instead of $\adc,\ada$.

\begin{lemma}\label{m2a}
Assume $E(\vec{s})$ is a formula all of whose free variables are among $\vec{s}$, and $y,z$ are  variables not occurring in $E(\vec{s})$. Then: 

(a) $\arnine$ proves $\tilde{E}(z,\vec{s})\add\gneg\tilde{E}(z,\vec{s})$.     

(b) For every $(\oo,y)$-development $H_{i}(y,\vec{s})$ of $E(\vec{s})$,  $\arnine$ proves $\tilde{E}(z,\vec{s})  \mli \ade u \tilde{H}_{i}(u,y,\vec{s})$.

(c) Where $H_1(y,\vec{s}),\ldots,H_n(y,\vec{s})$ are all of the $(\pp,y)$-developments of $E(\vec{s})$, $\arnine$ proves
\begin{equation}\label{m2e}   \gneg\mathbb{L}\mlc \tilde{E}(z,\vec{s}) \mlc \gneg \elz{E(\vec{s})} \mli  \ade u \ade y \tilde{H}_{1}(u,y,\vec{s})\add\ldots\add\ade u \ade y \tilde{H}_{n}(u,y,\vec{s})  . \end{equation}
\end{lemma}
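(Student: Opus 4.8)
The plan is to prove the three clauses separately. Clauses (a) and (b) are decidability/constructivity facts that already go through at the $\arseven$ level together with LC; only clause (c) genuinely uses the new rule IS of $\arnine$, and that is where the real work lies.

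For (a), the predicate arithmetized by $\tilde{E}(z,\vec{s})$ is primitive recursively decidable: from a candidate $z$ one extracts the last configuration of the purported history, reads off the run-tape position $\Phi$, and checks that $z$ codes a genuine computation history of $\cal X$, that $\Phi$ is a legal position of $X$, and that $\seq{\Phi}!X\equals E(\vec{c})$. Hence $\pa$ constructively proves the primitive recursive time computability of (the $\ada$-closure of) $\tilde{E}(z,\vec{s})\add\gneg\tilde{E}(z,\vec{s})$, so by Theorem 11.2 of \cite{cla5} $\arseven$ --- and therefore its extension $\arnine$ --- proves it. This is exactly parallel to the treatment of (\ref{j13b}) in the completeness proof for $\areight$.

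For (b), a fixed $(\oo,y)$-development $H_i$ corresponds to a single legal Environment labmove $\lambda$ in $E$ --- a choice of a component of a surface $\adc$, or the instantiation of a surface $\ada x$ by the value of $y$. I would introduce a primitive recursive function $f(z,y)$ that, applied to (a code of) a history whose yield is $E(\vec{c})$, returns the one-step-longer history obtained by appending the configuration in which Environment has just made $\lambda$; since $\lambda$ is always legal, $f(z,y)$ is again a genuine computation history and, by the definition of development, its yield is $H_i(y,\vec{c})$. Thus $\pa$ proves $\tilde{E}(z,\vec{s})\mli\tilde{H}_i(f(z,y),y,\vec{s})$, and combining this with the ($\arseven$-provable) computability $\ade u(u\equals f(z,y))$ of the primitive recursive $f$ via LC yields $\tilde{E}(z,\vec{s})\mli\ade u\tilde{H}_i(u,y,\vec{s})$. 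No winning assumption is needed, which is why $\gneg\mathbb{L}$ does not appear in (b).

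For (c), the guiding idea is that under the antecedent $\cal X$ is bound to make a move eventually, but with no recursive bound on when --- so we must search for it, and this is precisely what IS provides. Fixing $z,\vec{s}$, I would introduce the decidable predicate $R(w,z)$ saying that $w$ codes a computation history that extends $z$, arises from running $\cal X$ with Environment idle after $z$, and whose last labmove is the first move $\cal X$ makes after the position of $z$. As in (a), $\arnine$ proves $R(w,z)\add\gneg R(w,z)$, so by IS it proves $\cle w R(w,z)\mli\ade w R(w,z)$. The main obstacle is the purely $\pa$-level step of showing that the antecedent $\gneg\mathbb{L}\mlc\tilde{E}(z,\vec{s})\mlc\gneg\elz{E(\vec{s})}$ forces $\cle w R(w,z)$: here $\tilde{E}(z,\vec{s})$ makes $z$ a real computation history whose yield is $E(\vec{c})$, the falsity $\gneg\elz{E(\vec{s})}$ of the elementarization means $\cal X$ would lose the current position if it never moved again, and $\gneg\mathbb{L}$ --- i.e.\ $\gneg\mathbb{W}^X(\code{\cal X})$ --- says $\cal X$ wins $X$, hence wins the idle-Environment branch extending $z$; together these imply in $\pa$ that $\cal X$ cannot abstain from moving forever, so a witnessing $w$ exists. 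Feeding this into the IS-conclusion by LC gives $\ade w R(w,z)$ under the antecedent. Finally, the move recorded in any such $w$ is a legal $\top$-move in $E(\vec{c})$ (an illegal move would contradict $\gneg\mathbb{L}$), hence a choice in a surface $\add$ or an instantiation of a surface $\ade x$, realizing exactly one of the developments $H_1,\ldots,H_n$; the index $j$ and, in the $\ade$-case, the chosen constant are primitive recursively decodable from $w$, and $u\equals w$ together with that constant witnesses $\ade u\ade y\,\tilde{H}_j(u,y,\vec{s})$. Using the $\arseven$-provable computability of these decoding functions and the $n$-way decidable case analysis on $j$, LC lets $\arnine$ select the correct $\add$-disjunct and supply the $\ade u,\ade y$ witnesses, completing (c).
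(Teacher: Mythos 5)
Your proposal is correct and follows essentially the same route as the paper: clause (a) via primitive recursive decidability and Theorem 11.2 of \cite{cla5}, clause (b) by appending the Environment labmove to the history $z$ to witness $u$, and clause (c) by tracing $\cal X$ in the idle-Environment scenario until it is forced to move by $\gneg\mathbb{L}$ and $\gneg\elz{E(\vec{s})}$. The only difference is one of explicitness: where the paper simply says the unbounded tracing ``can be reproduced in $\arnine$'', you spell out the IS application on a decidable first-move predicate $R(w,z)$, which is a faithful formalization of the same argument rather than a different one.
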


\begin{proof} {\em Clause (a)}: Observe that the predicate $\tilde{E}(z,\vec{s})$ is  primitive recursive and, furthermore, $\pa$ obviously constructively proves the primitive recursive time computability of $\tilde{E}(z,\vec{s})\add\gneg\tilde{E}(z,\vec{s})$. Hence, by Theorem 11.2 of \cite{cla5}, $\arseven$ proves 
$\tilde{E}(z,\vec{s})\add\gneg\tilde{E}(z,\vec{s})$. Therefore, of course, so does $\arnine$.\vspace{5pt}

{\em Clause (b)}: Pick any one of the $(\oo,y)$-developments $H_{i}(y,\vec{s})$ of $E(\vec{s})$.  By clause (a) of the present lemma with $H_{i}(y,\vec{s})$ in the role of $E(\vec{s})$ and $u$ in the role of $z$, $\arnine$ proves 
 $ \tilde{H}_{i}(u,y,\vec{s}) \add \gneg \tilde{H}_{i}(u,y,\vec{s})$, 
whence, by Infinite Search,  $\arnine$ also proves 
\begin{equation}\label{nov15a}
\cle u\tilde{H}_{i}(u,y,\vec{s})  \mli \ade u\tilde{H}_{i}(u,y,\vec{s}).
\end{equation}

Consider any  ($\cla$)  values  of $y$ and $\vec{s}$, and assume $\tilde{E}(z,\vec{s})$ is true. 
 $H_{i}(y,\vec{s})$ is the game to which $E (\vec{s})$ is brought down by a certain labmove $\oo\alpha$. Let 
 $u$  be the result of appending such a labmove $\oo\alpha$ to the run-tape content of the last configuration of $z$. Then, obviously, $\tilde{H}_{i}(u,y,\vec{s})$  is true.   To summarize, we have just found that, if $\tilde{E}(z,\vec{s})$ is true, then so is $\cle u\tilde{H}_{i}(u,y,\vec{s})$.

Of course, the argument of the preceding paragraph is  formalizable in $\pa$,  which implies the ($\pa$-  and hence)  $\arnine$-provability of  
$  \cla \bigl(\tilde{E}(z,\vec{s})\mli\cle u\tilde{H}_{i}(u,y,\vec{s})\bigr)$. But, as it is not hard to see, for any formula $F$, $\ada F$ is a logical consequence of $\cla F$. Hence, the sentence $\ada \bigl(\tilde{E}(z,\vec{s})\mli\cle u\tilde{H}_{i}(u,y,\vec{s})\bigr)$ is also provable in $\arnine$, which is the same as to say that $\arnine$ proves  
\begin{equation}\label{nov15b}
\tilde{E}(z,\vec{s})\mli\cle u\tilde{H}_{i}(u,y,\vec{s}).
\end{equation}

Now, the target $\tilde{E}(z,\vec{s})  \mli \ade u \tilde{H}_{i}(u,y,\vec{s})$ can be seen to be a logical consequence of   (\ref{nov15a}) and (\ref{nov15b}), which makes it provable in $\arnine$.\vspace{5pt}

{\em Clause (c)}: Unlike our handling of clauses (a) and (b) that involved some explicit metareasoning, for both brevity and diversity, in the present case we the higher-level  approach of informally (but formalizably) reasoning directly in $\arnine$. 

So, to justify (\ref{m2e}), argue in $\arnine$. Assume $\gneg \mathbb{L}$, $\tilde{E}(z,\vec{s})$ and $\gneg \elz{E(\vec{s})}$. Consider the scenario where Environment does not move in any configuration starting from the last configuration of $z$. If $\cal X$ does not move either, it can be seen to lose, because $E(\vec{s})$ is the final position reached in the play and its elementarization $\elz{E(\vec{s})}$, by our assumption, is false. But  our assumption $\gneg\mathbb{L}$ means 
that $\cal X$ cannot lose. So,  sooner or later,  $\cal X$ will move. Let us call the computation history that extends $z$ to the point when the above event of $\cal X$ moving happens {\bf magical}. Thus,  a magical computation history exists in the sense of $\cle$. Also,  for any particular computation history $u$, we can obviously tell --- in the sense of $\add$ --- whether $u$ is magical or not.\footnote{After all, the predicate of ``being magical'' is primitive recursive, so $\arseven$ is sufficient to decide it.}   Hence, by Infinite Search, a magical computation history  exists not only in the sense of $\cle$, but also in the sense of $\ade$. That is, it can be actually found/computed. So, let $m$ be such a magical computation history. From $m$, we can further find, in the sense of $\ade$, the particular move $\alpha$ that $\cal X$ made in the above-described scenario, i.e., in the scenario represented by $m$.   This $\alpha$ must be legal, or else $\cal X$ would (but cannot) lose. If so, $\alpha$  brings $E(\vec{s})$ down to $H_{i}(n,\vec{s})$ for a certain $n$ and certain  $(\pp,y)$-development $H_{i}(y,\vec{s})$ of $E(\vec{s})$.  Again,  such  $i$ and $n$ not only exist, but can be actually found. Now, we can win (\ref{m2e}) by  choosing the  $\add$-disjunct $\ade u\ade y \tilde{H}_{i}(u,y,\vec{s})$  in the consequent, and then, in it,  specifying $u$ as $m$ and $y$ as $n$.\vspace{-7pt} 
\end{proof}

\begin{lemma}\label{m2c}
Assume $z,\vec{s}$ are pairwise distinct variables, and $E(\vec{s})$ is a formula all of whose free variables are among $\vec{s}$. 
Then \[\arnine\vdash \gneg \mathbb{L}\mlc\tilde{E}(z,\vec{s})  \mli E(\vec{s}).\]  
\end{lemma}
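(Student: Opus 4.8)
The plan is to prove the implication by induction on the number of occurrences of the choice operators $\adc,\add,\ada,\ade$ in $E(\vec{s})$ (call this the \emph{choice-complexity} of $E$), using the three clauses of Lemma \ref{m2a} together with LC. Throughout I abbreviate the elementary antecedent $\gneg\mathbb{L}\mlc\tilde{E}(z,\vec{s})$ as $P$. The base case is when $E(\vec{s})$ is elementary (choice-complexity $0$): then $\elz{E(\vec{s})}=E(\vec{s})$, and $E$ has no $(\pp,y)$-developments whatsoever, so the empty-disjunction ($n=0$) instance of Lemma \ref{m2a}(c) reads $\gneg\mathbb{L}\mlc\tilde{E}(z,\vec{s})\mlc\gneg E(\vec{s})\mli\oo$ (an empty $\add$ being $\oo$); by LC this is precisely $P\mli E(\vec{s})$.

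For the inductive step, assume $E(\vec{s})$ is non-elementary, so that it has at least one surface occurrence of a choice operator. Let $G_1(y,\vec{s}),\ldots,G_k(y,\vec{s})$ be all of its $(\oo,y)$-developments and $H_1(y,\vec{s}),\ldots,H_n(y,\vec{s})$ all of its $(\pp,y)$-developments. Each development resolves exactly one surface choice subformula and hence has strictly smaller choice-complexity, so the induction hypothesis applies to each of them (with $u,y,\vec{s}$ in the roles of $z,\vec{s}$). First I would dispatch Environment's possible moves: for each $(\oo,y)$-development $G_i$, I combine the induction hypothesis $\gneg\mathbb{L}\mlc\tilde{G}_i(u,y,\vec{s})\mli G_i(y,\vec{s})$ with clause (b), which gives $\tilde{E}(z,\vec{s})\mli\ade u\,\tilde{G}_i(u,y,\vec{s})$; since $G_i$ contains neither the history variable $u$ (so $\ade u$ is absorbed) nor $\mathbb{L}$, an LC step yields $P\mli G_i(y,\vec{s})$, and therefore $P\mli\ada y\,G_i(y,\vec{s})$.

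Next I would handle $\cal X$'s own moves. Clause (c) is stated exactly for the $(\pp,y)$-developments $H_1,\ldots,H_n$, giving $P\mlc\gneg\elz{E(\vec{s})}\mli\ade u\ade y\,\tilde{H}_1(u,y,\vec{s})\add\cdots\add\ade u\ade y\,\tilde{H}_n(u,y,\vec{s})$. Combining this with the induction hypotheses for the $H_j$ and again absorbing the inner $\ade u$ turns each disjunct $\ade u\ade y\,\tilde{H}_j$ (under the available $\gneg\mathbb{L}$) into $\ade y\,H_j(y,\vec{s})$, so I obtain $P\mlc\gneg\elz{E(\vec{s})}\mli\ade y\,H_1(y,\vec{s})\add\cdots\add\ade y\,H_n(y,\vec{s})$. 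Since $\elz{E(\vec{s})}$ is elementary, classical excluded middle (supplied by LC) upgrades this to $P\mli\bigl(\elz{E(\vec{s})}\mld\ade y\,H_1(y,\vec{s})\add\cdots\add\ade y\,H_n(y,\vec{s})\bigr)$.

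Finally I would close with a single LC application combining the $\oo$-part with the $\pp$-part. The logical validity that LC must supply here is the \emph{unfolding} principle
\[\Bigl(\ada y\,G_1(y,\vec{s})\adc\cdots\adc\ada y\,G_k(y,\vec{s})\Bigr)\mlc\Bigl(\elz{E(\vec{s})}\mld\ade y\,H_1(y,\vec{s})\add\cdots\add\ade y\,H_n(y,\vec{s})\Bigr)\mli E(\vec{s}),\]
which says, in purely game-semantic terms, that $\top$ wins $E$ as soon as it can win every $(\oo,y)$-development (covering all of Environment's legal moves out of $E$) and can, whenever $\elz{E(\vec{s})}$ is false, move to some $(\pp,y)$-development that it wins. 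I expect this to be the main obstacle: one must verify that the displayed implication is a genuine CL12 validity, so that LC is applicable. This amounts to checking that the one-step evolution of the game $E$ through its developments is faithfully mirrored by this $\adc/\add$ decomposition — the single genuinely \emph{logical} (as opposed to mere bookkeeping) ingredient of the argument, and the only place where the precise semantics of elementarization and of surface occurrences is used. Granting it, LC combines the two intermediate implications established above into $P\mli E(\vec{s})$, completing the induction. What remains is routine variable bookkeeping, namely confirming that the absorptions of $\ade u$ are legitimate (the developments do not mention $u$) and that the $\ada y$/$\ade y$ bindings introduced for the $\ada$- and $\ade$-type developments match the free variable $y$ supplied by clauses (b) and (c).
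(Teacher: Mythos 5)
Your reduction of the lemma to a single ``unfolding principle'' contains a genuine gap, and the gap is precisely at the step you yourself flag as the main obstacle. The problem is that your right-hand resource $\elz{E(\vec{s})}\mld\bigl(\ade y\,\tilde H_1\add\cdots\bigr)$ has been converted, via the induction hypotheses, into $\elz{E(\vec{s})}\mld\bigl(\ade y\,H_1(y,\vec{s})\add\cdots\add\ade y\,H_n(y,\vec{s})\bigr)$ \emph{before} the final LC step, and this destroys the verifiability of the advice it provides. When $\elz{E}$ happens to be true, the provider of this resource is free to resolve the $\add$-part to an arbitrary, ``frivolous'' $H_j(c)$ that it has no intention of winning, since the $\elz{E}$ disjunct already secures its victory; and $\elz{E}$ is an arbitrary arithmetical sentence, so the strategy for your unfolding principle cannot decide whether the advice is genuine. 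A machine that copies the advised move into $E$ can then be punished: take $E=\bigl((A_0\adc A_1)\mlc p\bigr)\mld(C\add D)$ with $p$ true, $A_1$ false, $C$ false, $D$ true. Here $\elz{E}=p$ is true, the resource may point to the development $H_1=\bigl((A_0\adc A_1)\mlc p\bigr)\mld C$ with impunity, and if the machine follows suit and commits $E$ to $C$, Environment then resolves $A_0\adc A_1$ to $A_1$ and $E$ is lost, while every antecedent resource remains won by its provider. A machine that instead ignores the advice loses in the dual situation where $\elz{E}$ is false and the advice was essential; and the two situations are not distinguishable by inspecting the moves alone. So the displayed implication is not justified by the copycat argument you sketch, and I do not see that it is a CL12 validity at all; in any case you have not established it, and LC requires actual CL12-provability.

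The paper's proof avoids exactly this trap, and the tell-tale sign is that your inductive step never uses clause (a) of Lemma \ref{m2a}, whereas the paper uses it crucially. The paper keeps the resource in the form $\gneg\elz{E(\vec{a})}\mli\ade u\ade y\,\tilde H_1(u,y,\vec{a})\add\cdots\add\ade u\ade y\,\tilde H_n(u,y,\vec{a})$: the provider's move now comes with a \emph{certificate} $d$ (a computation history) such that $\tilde H_j(d,c,\vec{a})$ is claimed to hold, and $\tilde H_j$ is a primitive recursive predicate whose decidability is available inside $\arnine$ precisely by clause (a). The paper's machine therefore \emph{checks} $\tilde H_j(d,c,\vec{a})$ before committing: if the check succeeds it safely makes the move and invokes the induction hypothesis $\gneg\mathbb{L}\mlc\tilde H_j(d,c,\vec{a})\mli H_j(c,\vec{a})$; if it fails, it concludes that $\elz{E(\vec{a})}$ must be true (else the provider would have lost), makes no move, and continues to handle Environment's moves via clause (b). By absorbing $\ade u\,\tilde H_j$ into $H_j$ too early, you have thrown away the decidable certificate and replaced it with the undecidable escape clause $\elz{E}$. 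To repair your argument you would need to perform the final combination at the level of the $\tilde H_j$'s, with clause (a) supplying the decision step --- which is exactly the paper's proof.
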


\begin{proof} We prove this lemma by induction on the complexity of $E(\vec{s})$. Pick a fresh variable $y$. By the induction hypothesis, for any $(\oo,y)$- or $(\pp,y)$-development $H_i(y,\vec{s})$ of $E(\vec{s})$ (if there are any), $\arnine$ proves 
\begin{equation}\label{m2d}
\gneg \mathbb{L}\mlc \tilde{H}_{i}  (u,y,\vec{s})  \mli H_i(y,\vec{s}).
\end{equation}

Argue in $\arnine$ to justify $\gneg \mathbb{L}\mlc\tilde{E}  (z,\vec{s})  \mli E(\vec{s})$. Consider any values (constants) $b$ and $\vec{a}$ chosen by Environment for 
$z$ and $\vec{s}$, respectively.\footnote{Here, unlike the earlier followed practice, for safety, we are reluctant to use the names $z$ and $\vec{s}$ for those constants.}  Throughout the rest of this argument, assume both $\gneg \mathbb{L}$ and $\tilde{E}  (b,\vec{a})$ are true (otherwise we win). We need to see 
how to win  $E(\vec{a})$. 
 
To solve $E(\vec{a})$, we bring the resource (\ref{m2e}) down to 
\[\gneg \mathbb{L}\mlc \tilde{E}(b,\vec{a}) \mlc \gneg \elz{E(\vec{a})} \mli  \ade u\ade y \tilde{H}_{1}(u,y,\vec{a})\add\ldots\add\ade u\ade y \tilde{H}_{n}(u,y,\vec{a}).\] 
Since  the $\gneg \mathbb{L}$ and $\tilde{E}(b,\vec{a})$ components of the above are true, in fact, the following resource is at our disposal:  
\begin{equation}\label{may18a}
\gneg \elz{E(\vec{a})} \mli  \ade u\ade y \tilde{H}_{1}(u,y,\vec{a})\add\ldots\add\ade u\ade y \tilde{H}_{n}(u,y,\vec{a}).\end{equation} 

We wait till one of the following two events takes place:

{\em Event 1}: Environment makes a move $\alpha$ in $E(\vec{a})$. We may assume that this move is legal. Then, for one of the $(\oo,y)$-developments $H_i(y,\vec{s})$ of $E(\vec{s})$ and some constant $c$, the labmove $\oo\alpha$ brings   $E(\vec{a})$  down to $H_i(c,\vec{a})$. So, now it remains to see how to win $H_i(c,\vec{a})$. In view of clause (b) of Lemma \ref{m2a} and the truth of $\tilde{E} (b,\vec{a})$, the resource $ \ade u \tilde{H}_{i}(u,c,\vec{a})$ is at our disposal. Using it, we find a $d$ with $ \tilde{H}_{i}(d,c,\vec{a})$. Now we bring (\ref{m2d}) down to 
$\gneg \mathbb{L}\mlc \tilde{H}_{i}  (d,c,\vec{a})  \mli H_i(c,\vec{a})$. Since the antecedent of this resource is true, it provides a sought way to win  
$H_i(c,\vec{a})$. 

{\em Event 2}: The provider of (\ref{may18a}) brings it down to  
\begin{equation}\label{jul29} 
 \gneg \elz{E(\vec{a})} \mli  \tilde{H}_{i}(d,c,\vec{a}).\end{equation} 
for one of $i\in\{1,\ldots,n\}$ and some constants $d,c$. Using clause (a) of Lemma \ref{m2a}, we check whether $\tilde{H}_{i}(d,c,\vec{a})$ is true. If not, $\elz{E(\vec{a})}$ is guaranteed to be true (otherwise (\ref{jul29}) would be lost by its provider), and we continue waiting for Event 1 to occur: if such an event never occurs, the truth of $\elz{E(\vec{a})}$ obviously means that we win. Suppose now $\tilde{H}_{i}(d,c,\vec{a})$ is true. Then, just as in the case of Event 1, (\ref{m2d}) provides a way to win $H_i(c,\vec{a})$. We make the move that brings $E(\vec{a})$ down to $H_i(c,\vec{a})$, and follow the just-mentioned way. 

If neither event happens, then $\elz{E(\vec{a})}$ is true (otherwise (\ref{may18a}) would be lost by its provider) and, again, we win. 
\end{proof}

\begin{lemma}\label{july}
$\arnine\vdash \gneg \mathbb{L}\mli X$.
\end{lemma}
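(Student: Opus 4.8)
The plan is to obtain Lemma \ref{july} as the special case of Lemma \ref{m2c} in which $E$ is $X$ itself and the tuple $\vec{s}$ of distinguished variables is empty, and then to discharge the remaining auxiliary antecedent by exhibiting one concrete, provably-recognized witnessing computation history. Since $X$ is a sentence, it trivially qualifies as a formula ``all of whose free variables are among $\vec{s}$'' when $\vec{s}$ is empty, and any variable $z$ is fresh for it. So I would simply instantiate Lemma \ref{m2c} with this choice, obtaining
\[\arnine\vdash \gneg \mathbb{L}\mlc\tilde{X}(z) \mli X,\]
where $z$ is now the only free variable of the displayed formula.

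The remaining work is to eliminate the conjunct $\tilde{X}(z)$ from the antecedent, and the idea is that there is a canonical value of $z$ for which $\tilde{X}(z)$ is \emph{true}. I would let $a_0$ be the code of the one-configuration computation history consisting solely of the initial configuration of $\cal X$. In that configuration the run tape is empty, so the position $\Phi$ spelled on it is the empty run $\seq{}$; this $\Phi$ is trivially a legal position of $X$, and by the very first clause defining the intensional yield we have $\seq{}!X=X$. Hence $a_0$ satisfies the defining condition of $\tilde{X}$, i.e.\ $\tilde{X}(\code{a_0})$ is a true elementary sentence. Being a true variable-free statement about the primitive recursive predicate $\tilde{X}$, it is provable in $\pa$, and therefore in $\arnine$ (which extends $\pa$).

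Finally I would combine the two established facts by LC. From $\arnine\vdash \gneg\mathbb{L}\mlc\tilde{X}(z)\mli X$ (read, as usual, via its $\ada z$-closure) together with $\arnine\vdash \tilde{X}(\code{a_0})$, instantiating $z$ at $\code{a_0}$ and discharging the now-true antecedent $\tilde{X}(\code{a_0})$ yields exactly $\arnine\vdash \gneg\mathbb{L}\mli X$. The one step deserving care --- the main (though routine) obstacle --- is the witness step: one must check that the initial history is genuinely a legitimate value of the variable ranged over by $\tilde{X}$, that its encoding is a concrete numeral $\code{a_0}$, and that the corresponding true instance of the primitive recursive predicate $\tilde{X}$ is $\pa$-provable for that specific numeral. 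Everything else is immediate from Lemma \ref{m2c} and the closure of $\arnine$-provability under LC.
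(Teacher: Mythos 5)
Your proposal is correct and follows essentially the same route as the paper: instantiate Lemma \ref{m2c} at $E=X$ with empty $\vec{s}$, observe that $\tilde{X}(\hat{a})$ is a true (hence $\pa$- and therefore $\arnine$-provable) elementary sentence for the standard term $\hat{a}$ of the initial computation history, and then discharge the antecedent by LC. The one detail the paper makes explicit that you elide is that the LC step also requires $\arnine\vdash \ade z(z\equals\hat{a})$ (Fact 12.6 of the first paper in the series), needed to actually instantiate the $\ada z$-closed implication at the term $\hat{a}$.
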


\begin{proof}
 Let $a$ be the code of the empty computation history, and $\hat{a}$ be the standard term for $a$.  Of course, $\pa$ and hence $\arnine$ proves $\tilde{X}(\hat{a})$. In view of Fact 12.6 of \cite{cla4}, $\arnine$ proves $\ade z(z=\hat{a})$. 
 By Lemma \ref{m2c},   $\arnine$ also proves $\ada z\bigl(\gneg\mathbb{L}\mlc \tilde{X}  (z)  \mli X\bigr)$.  These three can be seen to imply  $\gneg\mathbb{L}\mli X$ by LC. 
\end{proof} 
     
Now we are ready to claim the target result of this section. Suppose $\pa$ constructively proves the computability of $X$. We may assume that it proves ``$\cal X$ solves $X$'' for the earlier-fixed (yet arbitrary) HPM $\cal X$. In other words, $\pa\vdash\gneg\mathbb{L}$. Hence $\arnine\vdash\gneg\mathbb{L}$.  Then, in view of Lemma \ref{july}, $\arnine\vdash X$, as desired.

\section{$\arten$, a theory of {\bf PA}-provable computability}\label{ss12}

$\arten$ only differs from $\arnine$ in that it has the following additional rule of inference, called {\bf Constructivization}:

\[\frac{\cle xF(x)}{\ade xF(x)},\]
where $F(x)$ is any elementary\footnote{Just as in the case of $\areight$, the requirement that $F(x)$ is elementary can be dropped here without affecting the soundness of the system.} formula containing no free variables other than $x$.

Let $X$ be a sentence. We say that $\pa$ {\bf proves the computability of $X$} iff $\pa$ proves that there is ($\cle$) an HPM that 
wins $X$.

\begin{theorem}\label{mainthten}
For any sentence $X$, $\arten$ proves $X$ iff $\pa$ proves the computability of $X$.
\end{theorem}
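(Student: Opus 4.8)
The plan is to prove the two directions separately, reusing the $\arnine$ machinery (Theorem \ref{mainth}) as a black box and isolating the genuinely new effect of the Constructivization rule. For the soundness direction ($\arten\vdash X$ implies that $\pa$ proves the computability of $X$, i.e.\ $\pa\vdash\cle x\mathbb{W}^X(x)$) I would argue by induction on the length of an $\arten$-proof, but with a deliberately weakened induction hypothesis relative to $\arnine$: here one can only maintain the purely existential statement $\pa\vdash\cle x\mathbb{W}^X(x)$, and not the exhibition of a particular $\pa$-provably winning machine. The axiom case and the cases of the inherited rules (LC, $\arseven$-Induction, IS) go through because the solution-building constructions in the soundness proof of Theorem \ref{mainth} are uniform in, and $\pa$-provably correct relative to, \emph{arbitrary} winning machines for the premises; hence, arguing inside $\pa$, ``each premise has a winning machine'' yields ``the conclusion has a winning machine'' (for LC this is precisely Lemma \ref{j15a}, applied within $\pa$ to the merely asserted premise-solutions).

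The one new case is Constructivization, where $X$ is $\ade xF(x)$ and the premise $\cle xF(x)$ is an elementary sentence. For an elementary sentence $S$ the predicate $\mathbb{W}^S(\cdot)$ is $\pa$-provably equivalent to $S$ itself (an HPM wins a moveless elementary game iff the game is true), so the induction hypothesis $\pa\vdash\cle y\mathbb{W}^{\cle xF(x)}(y)$ amounts to $\pa\vdash\cle xF(x)$. Then, reasoning in $\pa$: from $\cle xF(x)$ one obtains an $n$ with $F(n)$; the machine ${\cal M}_n$ that chooses $n$ for $x$ and retires wins $\ade xF(x)$, so $\cle y\mathbb{W}^{\ade xF(x)}(y)$ holds. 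This is exactly where constructivity is lost, since $\pa$ need not prove $F(n)$ for any particular $n$; this is why, unlike its predecessors, the present theorem carries no ``furthermore'' clause.

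For completeness ($\pa$ proves the computability of $X$ implies $\arten\vdash X$), the key idea is to let Constructivization convert $\pa$'s non-constructive existence claim into a playable game. First I would establish a uniform-in-the-machine version of Lemma \ref{july}, namely $\arnine\vdash\ada w\bigl(\mathbb{W}^X(w)\mli X\bigr)$; this is obtained either by re-running the development of Lemmas \ref{m2a}, \ref{m2c} and \ref{july} with the fixed machine $\cal X$ replaced throughout by a free variable $w$ (so that $\tilde E$ acquires an extra argument ranging over machine codes), or, more slickly, by observing that a universal simulating machine $\cal U$ constructively and $\pa$-provably wins $\ada w(\mathbb{W}^X(w)\mli X)$ and then invoking the completeness part of Theorem \ref{mainth}. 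Next, since the hypothesis gives $\pa\vdash\cle x\mathbb{W}^X(x)$ and this is an elementary sentence, it is already a theorem of $\arten$; applying Constructivization with $F(x):=\mathbb{W}^X(x)$ yields $\arten\vdash\ade x\mathbb{W}^X(x)$. Finally, combining $\ade x\mathbb{W}^X(x)$ with $\ada w(\mathbb{W}^X(w)\mli X)$ by LC --- the underlying principle $\ade x\mathbb{W}^X(x),\ada w(\mathbb{W}^X(w)\mli X)\intimpl X$ being a valid, hence $\cltw$-provable, combination (choose the witness from the first resource, instantiate the second at it, and discharge the now-true elementary antecedent) --- gives $\arten\vdash X$.

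The main obstacle, anticipated on both sides, is bookkeeping of constructivity. On the soundness side the hard part will be resisting the $\arnine$-style hypothesis and instead checking that the old rules still transmit the weaker statement $\cle x\mathbb{W}^X(x)$ inside $\pa$; the delicate case is LC, where Lemma \ref{j15a} must be applied to asserted-but-not-exhibited premise-solutions, which requires viewing ordinary winning HPMs as GHPMs and reasoning about them within $\pa$. On the completeness side the real content is the uniform Lemma \ref{july}; once $\arnine\vdash\ada w(\mathbb{W}^X(w)\mli X)$ is secured, the Constructivization step and the concluding LC are routine.
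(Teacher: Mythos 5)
Your proposal is correct and follows essentially the same route as the paper: soundness by induction with the new Constructivization case reducing to the $\pa$-provable truth of the elementary premise and an existence (not exhibition) claim about the witnessing machine, and completeness via the uniform, machine-quantified version of Lemma \ref{july} (the paper's $\arnine\vdash\ada x(\gneg\mathbb{L}(x)\mli X)$, i.e.\ your $\ada w(\mathbb{W}^X(w)\mli X)$) combined with Constructivization applied to $\cle x\gneg\mathbb{L}(x)$ and a final LC step. Your explicit weakening of the induction hypothesis to $\pa\vdash\cle x\mathbb{W}^X(x)$ in the soundness direction is a careful spelling-out of a point the paper leaves implicit, but it is the same argument.
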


\subsection{Proof of the soundness part of Theorem \ref{mainthten}}

As always, this part of the theorem is verified by induction on the length of the proof of $X$. The cases where $X$ is an axiom, or is derived by LC or IS, are handled in the same way as in the soundness proof for $\arnine$. Here we shall only look at the case of $X$ being derived by Constructivization. So, assume $X$ is $\ade xF(x)$, where $F(x)$ is an elementary formula not containing any free variables other than $x$, and  $\ade xF(x)$ is derived from $\cle xF(x)$ by Constructivization. By the induction hypothesis, $\pa$ proves that $\cle xF(x)$ is ``computable'' which, as $F(x)$ is elementary, simply means that $\cle xF(x)$ is true. So, $\pa\vdash \cle xF(x)$. 

Argue in $\pa$. Since $\cle xF(x)$ is true, there is a number $a$ such that $F(a)$ is true. Then the target $\ade xF(x)$ is solved by an HPM $\cal M$ that makes $a$ as its only move in the play and retires. We ($\pa$, that is) cannot name such an $\cal M$ because we do not know what number $a$ exactly is; yet we know that $\cal M$ {\em exists}. In other words, we know that $X$ is computable. 

\subsection{Proof of the completeness part of Theorem \ref{mainthten}}

Consider an arbitrary sentence $X$. Let $\mathbb{L}(x)$ be a natural formalization of the predicate ``$x$ is (the code of) an HPM which does not win $X$''. Remember that, in  Section \ref{s19}, $\cal X$ was a fixed yet arbitrary HPM. And note that the sentence $\mathbb{L}$ of Section \ref{s19} was nothing but what we can now write as $\mathbb{L}(\code{\cal X})$. So, Lemma \ref{july} can now be re-stated as 
\begin{equation}\label{julya}
\arnine\vdash \gneg \mathbb{L}(\code{{\cal X}})\mli X.
\end{equation} 
Further observe that, while $\cal X$ was fixed in Section  \ref{s19}, the proof of Lemma \ref{july} given there goes through with $\cal X$ as a ($\ada$-quantified) variable; more precisely,   (\ref{julya}), in fact, holds in the following, stronger form:
\begin{equation}\label{julyb}
\arnine\vdash \ada x\bigl(\gneg \mathbb{L}(x)\mli X\bigr).
\end{equation}
By LC, (\ref{julyb}) implies 
\begin{equation}\label{julyc}
\arnine\vdash \ade x\gneg \mathbb{L}(x)\mli X.
\end{equation}
Assume now $\pa$ proves that $X$ is computable. In other words, $\pa\vdash \cle x\gneg\mathbb{L}(x)$. Then, by Constructivization, $\arten\vdash \ade x\gneg\mathbb{L}(x)$. But then, 
in view of (\ref{julyc}), $\arten\vdash X$, as desired.\vspace{10pt}


\begin{thebibliography}{99}



\bibitem{Japtowards} G. Japaridze. {\em Towards applied theories based on computability logic}. {\bf Journal of Symbolic Logic} 75 (2010), pp. 565-601.  


\bibitem{cla4} G. Japaridze. {\em Introduction to clarithmetic I}.  {\bf Information and Computation} 209 (2011), pp. 1312-1354.

\bibitem{Japlbcs} G. Japaridze. {\em A logical basis for constructive systems}. {\bf Journal of Logic and Computation} 22 (2012), pp. 605-642.   


\bibitem{cla5} G. Japaridze. {\em Introduction to clarithmetic II}.  Manuscript at http://arxiv.org/abs/1004.3236
\end{thebibliography}
\end{document}